\newtheorem{thm}{Theorem}
\def\rank{\mathrm{rank}}
\DeclareMathAlphabet{\bit}{OML}{cmm}{b}{it}
\def\<{\leqslant}           % nice less than or equal to sign
\def\>{\geqslant}           % nice larger than or equal to sign
\def\d{\partial}
\def\wh{\widehat}
\def\wt{\widetilde}
\def\lexp{\mathop{\overleftarrow{\exp}}}
\def\im{\mathrm{im} }   % image
\def\Re{\mathrm{Re} }   % real part
\def\Im{\mathrm{Im} }   % imaginary part
\def\mR{{\mathbb R}}
\def\mC{\mathbb{C}}
\def\Tr{\mathrm{Tr}}
\def\col{\mathrm{vec}}
\def\rT{\mathrm{T}}
\def\diam{\diamond}
\def\bE{\mathbf{E}}
\def\[[[{[\![\![}
\def\]]]{]\!]\!]}
\def\bra{\langle}
\def\ket{\rangle}
\def\re{\mathrm{e}}
\def\rd{\mathrm{d}}
\def\bJ{\mathbf{J}}
\def\x{\times}
\def\ox{\otimes}
\def\fF{\mathfrak{F}}
\def\fH{\mathfrak{H}}
\def\cI{\mathcal{I}}
\def\cE{\mathcal{E}}
\def\eps{\epsilon}
\def\ups{\upsilon}
\def\Ups{\Upsilon}
\title{\LARGE \bf
Decoherence Time Control by Interconnection for Finite-Level Quantum Memory Systems*}
\author{Igor G. Vladimirov$^{1}$, \qquad Ian R. Petersen$^{2}$% <-this % stops a space
\thanks{*This work is supported by the Australian Research Council  grants  DP210101938, DP200102945.}% <-this % stops a space
\thanks{$^{1,2}$School of Engineering, Australian National University, Acton, ACT 2601, Canberra, Australia, 
        {\tt igor.g.vladimirov@gmail.com, i.r.petersen@gmail.com}.}}
\begin{document}
\maketitle
\thispagestyle{empty}
\pagestyle{empty}

\begin{abstract}
This paper is concerned with open quantum systems whose dynamic variables have an algebraic structure, similar to that of the Pauli matrices for finite-level systems. The Hamiltonian and the operators of coupling of the system to the external bosonic fields depend linearly on the system variables. The fields are represented by quantum Wiener processes which drive the system dynamics according to a quasilinear Hudson-Parthasarathy quantum stochastic differential equation whose drift vector and dispersion matrix are affine and linear functions of the system variables. This setting includes the zero-Hamiltonian isolated system dynamics as a particular case, where the system variables  are constant  in time, which makes them potentially applicable as a quantum memory. In a more realistic case of nonvanishing system-field coupling, we define a  memory decoherence time when a mean-square deviation of the system variables from their initial values becomes relatively  significant as specified by a weighting matrix and a fidelity parameter.  We consider the decoherence time  maximization over the energy parameters of the system and obtain a condition under which the zero Hamiltonian provides a suboptimal solution. This optimization problem is also discussed for a direct energy coupling interconnection of such systems.
\end{abstract}

\section{INTRODUCTION}

The distinctions of quantum mechanics   \cite{LL_1981,S_1994} from classical (deterministic and stochastic) dynamics, exploited as potential resources in the development of quantum information and quantum computation technologies \cite{NC_2000}, come from the noncommutative operator-valued nature of quantum dynamic variables and quantum probability (where classical probability measures are replaced with quantum states in the form of density operators \cite{H_2001}). In the case of finite-level systems, which are particularly relevant to the above applications,  quantum mechanical operators act on a finite-dimensional complex Hilbert space and are organised as square matrices, with Hermitian matrices representing real-valued physical quantities.  If the quantum system variables (together with the identity matrix) form a basis  in the appropriate matrix space (as exemplified by the three Pauli $(2\x 2)$-matrices \cite{S_1994} as self-adjoint operators on the qubit space $\mC^2$), their pairwise products are affine functions of  the same set of variables, thus reducing any function (for example, a polynomial) of such variables to an affine function. This algebraic structure (along with the structure constants)   is preserved over the course of time regardless of whether the quantum system is isolated from its environment or is open and interacts with the external fields and other systems.

In the case of open quantum dynamics, the energetics of the system and its interaction with the surroundings is specified by the system Hamiltonian and the operators of coupling of the system to the external fields. In the presence of the algebraic structure,  the Hamiltonian and the coupling operators are, without loss of generality, linear and affine functions of the system variables,  parameterized by an energy vector and coupling parameters. In the framework of the Hudson-Parthasarathy calculus \cite{HP_1984,P_1992}, this leads to  quasilinear quantum stochastic differential equations (QSDEs) \cite{EMPUJ_2016,VP_2022_SIAM} whose drift vector depends affinely and the dispersion matrix depends linearly on the system variables, thereby resembling  the classical SDEs with state-dependent noise \cite{W_1967}.

In particular, if the energy and coupling parameters vanish, so that the quantum system is isolated and has zero Hamiltonian, then not only the Hamiltonian, but  all the system variables are conserved in time, thus forming a set of (in general noncommuting) quantum variables which store their  initial values. However, because of the unavoidable system-field coupling,  the quantum memory property can hold only approximately and over a bounded time interval. The energy exchange between the system and its environment  produces quantum noise \cite{GZ_2004} (accompanied by decoherence effects; see for example, \cite{VP_2023_SCL} and references  therein),  which makes the system variables drift away from their initial conditions even if the energy vector (and hence, the Hamiltonian) is zero.
The  time when this deviation 
becomes relatively significant  on a ``typical'' scale of the initial system variables suggests a performance index for the system as a temporary quantum memory  (see \cite{FCHJ_2016,YJ_2014} and references therein for other settings).

In the present paper, we use the tractability of the second and higher-order moment dynamics for the system variables \cite{VP_2022_SIAM} (in the case of external fields in the vacuum state \cite{P_1992}) and the fundamental solutions of quasilinear QSDEs in the form of  time-ordered exponentials. Similarly to \cite{VP_2023_ANZCC}, this leads to a practically computable memory decoherence time,  which is defined in terms of a weighted mean-square deviation of the system variables from their  initial values and involves a weighting matrix along with a dimensionless fidelity parameter.
We propose a problem of maximizing the decoherence time over the energy and coupling parameters of the system and discuss its approximate version using a quadratically truncated Taylor series expansion of the decoherence time over the fidelity parameter. With respect to the energy vector,  this approximate decoherence time maximization is organized as a quadratic optimization problem which admits a closed-form solution. The latter provides  a condition on the coupling parameters under which the zero energy vector is a suboptimal solution of the decoherence time maximization  problem. 
We apply this optimization approach to an interconnection of such systems through a  direct energy coupling.

\section{ISOLATED QUANTUM SYSTEM WITH AN ALGEBRAIC STRUCTURE}
\label{sec:iso}

In the Heisenberg picture of quantum dynamics \cite{S_1994}, physical quantities are represented by quantum variables $\xi(t)$, which are linear operators acting on a Hilbert space $\fH$ and depending on time $t\> 0$, while  the quantum probabilistic structure is specified by a fixed  positive semi-definite self-adjoint   density operator (a quantum state) $\rho = \rho^\dagger \succcurlyeq 0$ of unit trace $\Tr \rho = 1$ on $\fH$ along with the expectation $\bE \xi := \Tr (\rho \xi)$.
As in \cite{VP_2022_SIAM}, we consider a quantum system with dynamic variables $X_1(t), \ldots, X_n(t)$ which are self-adjoint operators on $\fH$, assembled into a column-vector  $X:= (X_k)_{1\< k\< n}$ with an algebraic structure
\begin{equation}
\label{Xalg}
    X(t)X(t)^\rT = \alpha  + \sum_{\ell=1}^n \beta_\ell X_\ell (t),
    \qquad
    t \> 0,
\end{equation}
whereby every pairwise product $X_jX_k$  is an affine function of $X_1, \ldots, X_n$.
Here, $\alpha := (\alpha_{jk})_{1\< j,k\< n}= \alpha^\rT \in \mR^{n\x n}$ and $\beta_\ell:= (\beta_{jk\ell})_{1\< j,k\< n} = \beta_\ell^* \in \mC^{n\x n}$ are constant real symmetric and complex Hermitian matrices, respectively, with $(\cdot)^*:= \overline{(\cdot)}{}^\rT$  the complex conjugate transpose.  In (\ref{Xalg}),      the matrix $\alpha$ is identified with  its tensor product $\alpha \ox \cI = (\alpha_{jk}\cI)_{1\< j,k\< n}$ with the identity operator $\cI$ on the space $\fH$, and, in a similar fashion,  $\beta_\ell X_\ell:= (\beta_{jk\ell}X_\ell)_{1\< j,k\< n}$ is an $(n\x n)$-matrix of operators which are ``rescaled''  copies of $X_\ell$. The matrices $\beta_1, \ldots, \beta_n$ are sections of an array $\beta:= (\beta_{jk\ell})_{1\< j,k,\ell\< n} \in \mC^{n\x n\x n}$. Their imaginary parts are real antisymmetric matrices
\begin{equation}
\label{Thetaell}
    \Theta_\ell
    :=
    (\theta_{jk\ell})_{1\< j,k\< n}
    :=
    \Im \beta_\ell
    =
    -\Theta_\ell^\rT
    \in
    \mR^{n\x n}
\end{equation}
which form the corresponding sections of the  $(n\x n\x n)$-array
\begin{equation}
\label{Theta}
    \Theta : =  (\theta_{jk\ell})_{1\< j,k,\ell \< n}:=  \Im \beta \in \mR^{n\x n\x n}.
\end{equation}
The latter specifies the  canonical commutation relations (CCRs) for the system variables (at the same but otherwise arbitrary moment of time):
\begin{equation}
\label{XXcomm}
    [X,X^\rT]
      :=
    ([X_j,X_k])_{1\< j,k\< n}
    =
    2i \Theta \cdot X,
\end{equation}
where $[\xi,\eta]:= \xi\eta -\eta\xi$ is the commutator of linear operators.
Here, for any array $\gamma:= (\gamma_{jk\ell})_{1\< j,k,\ell \< n} \in \mC^{n\x n\x n}$ with sections $\gamma_\ell:= (\gamma_{jk\ell})_{1\< j,k \< n} \in \mC^{n\x n}$ and $\gamma_{\bullet k \bullet}:= (\gamma_{jk\ell})_{1\< j,\ell \< n} \in \mC^{n\x n}$,  and any vector $u:= (u_\ell)_{1\< \ell \< n}$ of $n$ quantum variables on $\fH$, we use the following  product:
\begin{equation}
\label{cdot}
    \gamma \cdot u
    :=
    \sum_{\ell = 1}^n
    \gamma_\ell  u_\ell
    =
    \begin{bmatrix}
      \gamma_{\bullet 1 \bullet} u
      &
      \ldots &
      \gamma_{\bullet n \bullet} u
    \end{bmatrix},
\end{equation}
which yields an $(n\x n)$-matrix of quantum variables, with the columns
$\gamma_{\bullet k \bullet} u$.    Accordingly, the rightmost sum in (\ref{Xalg}) can be represented as $\beta\cdot X$. We will also employ a different product 
\begin{equation}
\label{diam}
    \gamma \diam u
    :=
    \begin{bmatrix}
        \gamma_1 u & \ldots & \gamma_n u
    \end{bmatrix},
\end{equation}
which,  as mentioned in \cite{VP_2022_SIAM}, is associated with (\ref{cdot}) by
\begin{equation}
\label{cdotdiam}
  (\gamma \cdot u)v
  =
  (\gamma\diam v)u
  =
  \begin{bmatrix}
    \gamma_1 & \ldots &  \gamma_n
  \end{bmatrix}
  (u\ox v)
\end{equation}
(with $\ox$ the Kronecker product)
for any vectors $u$, $v$  of $n$ quantum variables with zero cross-commutations: $[u,v^\rT] = 0$.

An example of $n=3$ quantum variables with
the algebraic structure (\ref{Xalg}) is provided by the Pauli matrices \cite{S_1994}
\begin{equation}
\label{X123}
    \sigma_1:=
    {\begin{bmatrix}
      0 & 1\\
      1 & 0
    \end{bmatrix}},
    \qquad
    \sigma_2:=
    {\begin{bmatrix}
      0 & -i\\
      i & 0
    \end{bmatrix}}
    =
    -i\bJ,
    \qquad
    \sigma_3:=
    {\begin{bmatrix}
      1 & 0\\
      0 & -1
    \end{bmatrix}},
\end{equation}
which are traceless Hermitian matrices on the Hilbert space $\fH:= \mC^2$ of a qubit as the simplest finite-level quantum system,  where
\begin{equation}
\label{bJ}
        \bJ
        : =
        {\begin{bmatrix}
        0 & 1\\
        -1 & 0
    \end{bmatrix}}
\end{equation}
spans the subspace of antisymmetric $(2\x 2)$-matrices.
The structure constants for (\ref{X123}) form the identity matrix $\alpha$ of order $3$
and an  imaginary $(3\x 3\x 3)$-array $\beta$:
\begin{equation}
\label{alfbet}
    \alpha = I_3,
    \qquad
    \beta = i \Theta.
\end{equation}
Here, in accordance with (\ref{Theta}), the CCR array $\Theta \in\{0, \pm1\}^{3\x 3\x 3}$ consists of the Levi-Civita symbols  $\theta_{jk\ell} = \eps_{jk\ell}$, and its sections
\begin{equation}
\label{T123}
    \Theta_1 =
    {\begin{bmatrix}
     0&      0  &   0\\
     0&     0   &  1\\
     0&    -1   &  0
    \end{bmatrix}},
    \qquad
    \Theta_2 =
    {\begin{bmatrix}
     0&      0  &   -1\\
     0&     0   &  0\\
     1&    0   &  0
    \end{bmatrix}},
    \qquad
    \Theta_3 =
    {\begin{bmatrix}
     0&      1  &   0\\
     -1&     0   &  0\\
     0&    0   &  0
    \end{bmatrix}}
\end{equation}
in (\ref{Thetaell}) form
a basis in the subspace of antisymmetric $(3\x 3)$-matrices,   with $(\Theta\diam u)v$ being the cross product of vectors $u,v\in \mR^3$, where 
    $\Theta \diam u
    =
    \begin{bmatrix}
      \Theta_1 u &
      \Theta_2 u &
    \Theta_3 u
    \end{bmatrix}
    =
    {\scriptsize\begin{bmatrix}
      0 & -u_3 & u_2\\
      u_3 & 0 & -u_1\\
      -u_2 & u_1 & 0
    \end{bmatrix}}$ 
is the  infinitesimal generator for the group of rotations in $\mR^3$ about $u$ (as an angular velocity vector)
in view of (\ref{diam}), (\ref{T123}). The set $\{I_2, \sigma_1, \sigma_2, \sigma_3\}$ is a  basis  in the four-dimensional real space of complex Hermitian $(2\x 2)$-matrices which describe self-adjoint operators on the qubit space $\mC^2$.

Returning to the general case of quantum system variables satisfying (\ref{Xalg}), we note that their algebraic structure reduces any polynomial of the system variables to an affine function. Therefore, the Hamiltonian, which determines the  internal energy of the quantum system 
\cite{S_1994},  is assumed, without  loss of generality, to be a linear function of the system variables:
\begin{equation}
\label{HE}
    H
    :=
    E^\rT X,
\end{equation}
where $E\in \mR^n$ is an energy vector.  As in the classical case \cite{A_1989}, adding a constant $c\cI$, with $c\in \mR$,  to $H$ is redundant. The Hamiltonian $H$ completely specifies the Heisenberg  evolution of the isolated quantum system according to the ODE
\begin{equation}
\label{Xdot}
  \dot{X} =
  i[H,X]
  =
  -i[X, X^\rT] E
  =
  2(\Theta \cdot X)E
  =
  A_0 X,
\end{equation}
where $\dot{(\ )}:= \rd/\rd t$ is the time derivative. Here, the matrix $A_0 \in \mR^{n\x n}$ results  from a combination of (\ref{HE}) with  (\ref{XXcomm})--(\ref{cdotdiam}):
\begin{equation}
\label{A0}
    A_0
    := 2\Theta\diam E =
    2\begin{bmatrix}
      \Theta_1 E & \ldots & \Theta_n E
    \end{bmatrix},
\end{equation}
and its linear dependence on $E$ is represented in vectorized form \cite{M_1988} as \begin{equation}
\label{colA0}
    \col(A_0) = 2\mho E,
    \qquad
    \mho
    :=
    {\begin{bmatrix}
        \Theta_1\\
        \vdots\\
        \Theta_n
    \end{bmatrix}}.
\end{equation}
Due to the antisymmetry of the matrices $\Theta_1, \ldots, \Theta_n$ in (\ref{Thetaell}),
the columns $2\Theta_k E$  of $A_0$ in (\ref{A0}) are orthogonal to the energy vector $E$, so that
\begin{equation}
\label{EA0}
    E^\rT  A_0 =      2
    \begin{bmatrix}
        E^\rT\Theta_1 E & \ldots & E^\rT\Theta_n E
    \end{bmatrix}
 = 0.
\end{equation}
The latter leads to $\dot{H} = E^\rT \dot{X} = E^\rT A_0 X = 0$ which is a manifestation  of the property $\dot{H} = i[H,H] = 0$ that the Hamiltonian of an isolated system is a conserved operator regardless of a particular dependence of $H$ on $X$.
Furthermore, if $E=0$ (that is, when  the system Hamiltonian (\ref{HE}) vanishes),  not only $H$ is preserved in time, but so also are the system variables $X_1, \ldots, X_n$ themselves.
The conservation of a set of noncommutative quantum  variables (rather than a single Hamiltonian) can be regarded as a quantum mechanical resource  for making such a system potentially applicable as a quantum memory. A less restrictive condition is provided by a nondecaying dependence of $X(t)$ on $X(0)$, as $t \to +\infty$, considered below.

From (\ref{EA0}),  it follows that  
$
    E \in \ker (A_0^{\rT})
$. Hence, if $E\ne 0$, then
$E$ is an eigenvector of $A_0^\rT$ with zero eigenvalue, which implies that $\det A_0=0$.  Furthermore, the eigenvectors of $A_0$ with nonzero eigenvalues can only be in the  hyperplane with the normal $E$  since
$
    \im A_0:=
    A_0 \mC^n  \subset E^\bot := \{v \in \mC^n:\ E^\rT v = 0\}
$.
By \cite[Theorem 5.1]{VP_2022}, if the matrix $\alpha$ in (\ref{Xalg}) is positive definite 
(as exemplified by (\ref{alfbet}) for the Pauli matrices (\ref{X123})), then for any energy vector $E \in \mR^n$, the matrix $A_0$ in (\ref{A0}) is diagonalizable and its spectrum is on the imaginary axis $i\mR$ and symmetric about the origin (moreover, $\det A_0=0$ with necessity if the dimension $n$ is odd).  In this case,  the system variables perform oscillatory motions or are static (which corresponds to purely imaginary or zero eigenvalues of $A_0$, respectively) with no dissipation because the dependence of $X(t) = \re^{tA_0} X(0)$ on the initial condition $X(0)$ from (\ref{Xdot}) does not fade away, as $t   \to +\infty$, and $X(t)$ does not ``lose memory'' about  $X(0)$.

The issues of ``initializing'' and ``retrieving''   the system variables, which correspond   to the ``write'' and ``read'' memory operations,  are beyond the scope of this paper. Instead, we will discuss the effect of coupling (usually, dissipative)  between the system and its environment (including other quantum systems and  external fields), which forces the system variables evolve in time even if the Hamiltonian is zero.

\section{OPEN QUANTUM SYSTEM}
\label{sec:open}

Compared to the isolated quantum system dynamics (\ref{Xdot}), a more realistic setting is provided by an open version of the system shown schematically in Fig.~\ref{fig:open}.
The internal and output variables of the open quantum system evolve in time according to the Hudson-Parthasarathy 
QSDEs  \cite{EMPUJ_2016,VP_2022_SIAM}:
\begin{equation}
\label{dXdY}
    \rd X  = (AX+b) \rd t + B(X)\rd W,
    \qquad
    \rd Y  = (CX+d) \rd t + D\rd W,
\end{equation}
the first of which is quasi-linear, while the second one is linear.
Here, $A \in \mR^{n\x n}$, $b \in \mR^n$, $C \in \mR^{r\x n}$, $d \in \mR^r$,  $D\in \mR^{r\x m}$  are constant matrices and vectors, with $r\< m$ even, while $B(X)$ is an $(n\x m)$-matrix  of self-adjoint operators, which depend linearly on the system variables as specified below.
\begin{figure}[htbp]
\centering
\unitlength=1mm
\linethickness{0.5pt}
\begin{picture}(40.00,5)
    \put(15,-3){\framebox(10,6)[cc]{\scriptsize system}}
    \put(35,0){\vector(-1,0){10}}
    \put(15,0){\vector(-1,0){10}}
    \put(38,0){\makebox(0,0)[cc]{\small $W$}}
    \put(2,0){\makebox(0,0)[cc]{\small $Y$}}
\end{picture}
\caption{An open quantum system with the input and output fields $W$, $Y$ governed by (\ref{dXdY}).}
\label{fig:open}
\end{figure}
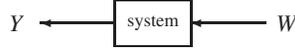
The QSDEs (\ref{dXdY}) are driven by a vector $W:=(W_k)_{1\< k \< m}$ of quantum Wiener processes which are time-varying self-adjoint operators on a symmetric Fock space \cite{PS_1972} $\fF$ representing the input bosonic fields. In contrast to the components of the  classical Brownian motion \cite{KS_1991} in $\mR^m$, the quantum processes $W_1, \ldots, W_m$  do not commute with each other and  have a complex positive semi-definite Hermitian Ito matrix  $\Omega = \Omega^* \succcurlyeq 0$: 
\begin{equation}
\label{Omega}
    \rd W \rd W^\rT
    = \Omega \rd t,
    \qquad
    \Omega: = I_m + iJ,
    \qquad
    J
    :=
    I_{m/2}\ox\bJ,
\end{equation}
where
$\bJ$ is given by (\ref{bJ}). The matrix $J= \Im \Omega$ specifies the two-point CCRs $  [W(s), W(t)^\rT]
  =
  2i \min(s,t)J
$ for the quantum Wiener process  $W$ for any
  $ s, t \> 0$.
Also, $Y:= (Y_k)_{1\< k\< r}$ in (\ref{dXdY}) is a vector of $r$ time-varying self-adjoint operators on the space $\fH$,   which are selected from the  full set of the output fields produced by the interaction of the system with the input fields. The feedthrough matrix $D$ consists of conjugate pairs of  $r$ rows  of a permutation $(m\x m)$-matrix, so that, without loss of generality,  $Y$ has the quantum Ito matrix $D\Omega D^\rT = I_r + i I_{r/2} \ox \bJ$ in view of (\ref{Omega}) and the Ito product rules (in particular, $Y$ includes all the output fields if $r=m$ and $D=I_m$).
The corresponding system-field space
\begin{equation}
\label{fH}
    \fH := \fH_0 \ox \fF
\end{equation}
is the tensor product of the initial system space $\fH_0$ (for the action of  $X_1(0), \ldots, X_n(0)$) and the Fock space $\fF$. Accordingly, the system-field quantum state on $\fH$ is assumed to be in the form
\begin{equation}
\label{rho}
    \rho:= \rho_0\ox \ups,
\end{equation}
where $\rho_0$ is the initial system state on $\fH_0$, and $\ups$ is the vacuum field state \cite{P_1992} on $\fF$.
  Whereas the internal energy of the open quantum system is specified by the Hamiltonian $H$ in  (\ref{HE}) as before, the system-field coupling (through the energy exchange between the system and the external quantum fields) is captured by a vector
  \begin{equation}
  \label{LMN}
    L:= MX + N
  \end{equation}
  of $m$ self-adjoint coupling operators,  which are affine functions of the system variables parameterized by $M \in \mR^{m\x n}$, $N\in \mR^m$. The coefficients    of the QSDEs (\ref{dXdY}) are computed in terms of the structure constants from (\ref{Xalg}), the CCR matrices in (\ref{Thetaell}), (\ref{Omega}) and the energy and coupling parameters in (\ref{HE}), (\ref{LMN}) as follows (see \cite[Theorem 3.1]{VP_2022_SIAM} and \cite[Lemma~4.2 and Theorem~6.1]{EMPUJ_2016}):
\begin{align}
\label{A_b}
    A
    & :=
    A_0 + \wt{A},
    \qquad
    b
      :=
    -2\mho^\rT \col(M^\rT J M \alpha),\\
\label{BX}
    B(X)
     & := 2(\Theta \cdot X)M^\rT,\\
\label{C_d}
    C
    & := 2DJ M,
    \qquad
    d:= 2DJ N.
\end{align}
Here, use is also made of the products  (\ref{cdot}), (\ref{diam})  for the CCR array $\Theta$ from (\ref{Theta}) with the vector $X$ of system variables  and $E + M^\rT J N \in \mR^n$, along with the isolated dynamics matrix $A_0$ in (\ref{A0}) and the auxiliary matrix $\mho$ from (\ref{colA0}). The remaining part of the matrix $A$ in (\ref{A_b}) is given by
\begin{equation}
\label{At}
    \wt{A}
    :=
        2
        \Theta \diam (M^\rT JN)
    +
    2
    \sum_{\ell = 1}^n
    \Theta_\ell
    M^\rT
    (
        M\theta_{\ell\bullet \bullet}
        +
        J M\Re \beta_{\ell\bullet \bullet}),
\end{equation}
which is a quadratic function of the coupling parameters $M$, $N$. In view of (\ref{A0}), (\ref{At}), 
for any fixed $M$, $N$, the matrix $A$ in (\ref{A_b}) belongs to a proper (whenever $n>1$) affine subspace $\im(\Theta \diam) + \wt{A} \subset \mR^{n\x n}$ since the dimension of the subspace $\im(\Theta \diam) :=  \Theta \diam \mR^n = \{\Theta \diam u:\ u \in \mR^n\}$ does not exceed $n$.

If the system is uncoupled from the environment (that is, $M = 0$, $N=0$ in (\ref{LMN})), then $\wt{A}$, $b$,  $B(X)$, $C$, $d$ in (\ref{A_b})--(\ref{At})  vanish. In this case, $A=A_0$ as given by (\ref{A0}),  the first QSDE in (\ref{dXdY}) reduces to the ODE (\ref{Xdot}), and the second QSDE takes the form $\rd Y = D\rd W$, whereby the system has no effect on the output field $Y$. Also note that in the presence of system-field coupling, the matrix $A$ in (\ref{A_b}), which depends on the coupling  parameters $M$, $N$  through $\wt{A}$ in (\ref{At}),   can be made nonzero even in the zero-Hamiltonian case of $E=0$ (when $A_0 = 0$).

According to \cite[Theorem 3.2]{VP_2022_SIAM}, the system variables, as a solution of the quasi-linear QSDE in (\ref{dXdY}),  can be represented as
\begin{equation}
\label{Xts}
    X(t) = \cE(t,0)X(0) + \int_0^t \cE(t,s)\rd s b,
    \qquad
    t \> 0,
\end{equation}
in terms of an $(n\x n)$-matrix  $\cE(t,s)
  :=
  (\cE_{jk}(t,s))_{1\< j,k\< n}
$ of self-adjoint operators on the Fock space $\fF$ defined for any $t\> s\> 0$ by the leftwards time-ordered operator exponential
\begin{equation}
\label{Ets}
  \cE(t,s)
  :=
    \lexp
  \int_s^t
  (A\rd \tau + 2\Theta \diam (M^\rT \rd W(\tau))).
\end{equation}
The columns $\cE_{\bullet k}(t,s):= (\cE_{jk}(t,s))_{1\< j \< n}$ of the matrix $\cE(t,s)$ form the fundamental solutions of the  homogeneous QSDE
\begin{equation}
\label{dEts}
    \rd_t \cE_{\bullet k}(t,s)
    =
    A \cE_{\bullet k}(t,s)\rd t + B(\cE_{\bullet k}(t,s))\rd W(t)
    =
    (A\rd t + 2\Theta \diam (M^\rT \rd W(t)))\cE_{\bullet k}(t,s),
\end{equation}
which is obtained by representing the first QSDE in (\ref{dXdY}) as $\rd X = (A\rd t + 2\Theta \diam (M^\rT \rd W))X + b \rd t$ (in view of (\ref{cdotdiam}) and the commutativity $[X, \rd W^\rT] =0$ between the system variables and the  future-pointing Ito increments of $W$) and    removing the vector $b$.   For any $k = 1, \ldots, n$, the QSDE (\ref{dEts}) is initialized at the $k$th standard basis vector  in $\mR^n$ as $\cE_{\bullet k}(s,s) = (\delta_{jk})_{1\< j\< n}$, with $\delta_{jk}$ the Kronecker delta,  so that   $\cE(s,s) =I_n$. Due to the continuous tensor-product structure of the Fock space $\fF$ and the  vacuum field state assumption in (\ref{rho}),  the quantum variables $\cE_{jk}(t,s)$ commute with and are statistically independent  of $X_\ell(s)$ for all $j,k,\ell = 1, \ldots, n$ and  $t\> s\> 0$.
In the case of vacuum input fields being considered, the diffusion term $B(\cE_{\bullet k})\rd W$ is a martingale part  of the QSDE in (\ref{dEts}) and does not contribute to its averaging, which leads to
\begin{equation}
\label{EEts}
    \bE \cE(t,s) = \re^{(t-s)A},
    \qquad
    t \> s \> 0
\end{equation}
(see \cite[Eq. (4.20)]{VP_2022_SIAM}). This suggests that the dependence of $X(t)$ in (\ref{Xts}) on the initial condition $X(0)$ through the term $\cE(t,0) X(0)$ decays exponentially fast, as $t\to +\infty$, if the matrix $A$ is Hurwitz. In the latter case, there is a limit
\begin{equation}
\label{mu*}
    \mu_\infty:=
  \lim_{t\to +\infty}\mu(t) = -A^{-1} b
\end{equation}
for the mean vector of the system variables, which is computed (regardless of the spectral properties of $A$) by averaging (\ref{Xts}) and using (\ref{EEts}) as
\begin{equation}
\label{mu}
  \mu(t)
  :=
  \bE X(t) = \re^{tA}\mu(0) + \psi(t) b.
\end{equation}
Here, use has also been made of an $\mR^{n\x n}$-valued function
\begin{equation}
\label{psi}
    \psi(t):= \int_0^t \re^{sA}\rd s
    =
    \sum_{k=1}^{+\infty}
    \frac{t^k}{k!}
     A^{k-1}
     =
     (\re^{tA}-I_n)A^{-1},
\end{equation}
which results from
evaluating  \cite{H_2008} the entire function $\frac{\re^{tz}-1}{z}$ of $z \in \mC$ (extended to $t$ at $z=0$ by continuity) at the matrix $A$ in (\ref{A_b}). The rightmost equality in (\ref{psi}) applies to the case of $\det A\ne0$.

The matrix exponential $\re^{tA}$ is also present in the following alternative form of the solution of the first QSDE in (\ref{dXdY}) (with $b\rd t + B(X)\rd W$ being regarded as a forcing term added to the right-hand side of the homogeneous QSDE $\rd X = A X \rd t$):
\begin{equation}
\label{Xsol}
    X(t) = \re^{tA} X(0) + \psi(t) b + Z(t).
\end{equation}
Here, due to the input field $W$ being in the vacuum state $\ups$ on  $\fF$  (and regardless of whether the matrix $A$ is Hurwitz),
the response
\begin{equation}
\label{Z}
    Z(t)
    :=
    \int_0^t
    \re^{(t-s)A}
    B(X(s))
    \rd W(s)
\end{equation}
of the system variables to $W$ is a zero-mean quantum process on the system-field space $\fH$ in (\ref{fH}), uncorrelated with $X(0)$:
\begin{equation}
\label{EZX0}
    \bE Z(t) = 0,
    \qquad
  \bE(Z(t)X(0)^\rT) = 0.
\end{equation}
However, $X(t)$ in (\ref{Xsol}) depends on $X(0)$ not only through the term $\re^{tA} X(0)$ but also through the process $Z(t)$, whereas in (\ref{Xts}), the dependence $X(0)\mapsto X(t)$ is captured in $\cE(t,0)X(0)$ since $\cE(t,s)$ in (\ref{Ets}) is independent of $X(0)$ for any $t\> s\> 0$. By comparing (\ref{Xts}) with (\ref{Xsol}) and using (\ref{EEts}), (\ref{psi}), the process $Z$ in (\ref{Z}) can be  expressed as
\begin{equation}
\label{ZE}
    Z(t) = \wt{\cE}(t,0) X(0) +\int_0^t \wt{\cE}(t,s) \rd s b
\end{equation}
in terms of centered versions of the  exponentials $\cE$ in (\ref{Ets}):
\begin{equation}
\label{Et}
    \wt{\cE}(t,s): = \cE(t,s) - \re^{(t-s)A},
    \qquad
    \bE  \wt{\cE}(t,s) = 0.
\end{equation}
Note that (\ref{EZX0}) can also be obtained from (\ref{ZE}), (\ref{Et}) by using the commutation and statistical properties of (\ref{Ets}) mentioned above.

With the view of employing the system as a quantum memory (aiming to retain its initial variables without dissipation),   of particular interest are those energy and coupling parameters $E$, $M$, $N$ in (\ref{HE}), (\ref{LMN})  for which at least some of the eigenvalues of the matrix $A$ in (\ref{A_b}) are zero or purely imaginary.
On the other hand, in the absence of dissipation (that is, when the matrix $A$ is not Hurwitz),  the processes  $Z(t)$ in (\ref{Xsol}) or $\cE(t,s)$ in (\ref{Xts}), which play the role of quantum noises contaminating the    dependence of $X(t)$ on $X(0)$, grow with time $t$.

These qualitatively different scenarios (when $A$ is Hurwitz and when its spectrum is imaginary) can be analyzed in a unified fashion
in terms of a signal-to-noise ratio which quantifies the relative size of a useful part of $X(t)$ carrying information about $X(0)$.

\section{MEAN-SQUARE DEVIATION FROM INITIAL CONDITIONS}
\label{sec:dev}

By (\ref{Xsol}), the deviation of the system variables at time $t\> 0$ from their initial values takes the form
\begin{equation}
\label{xi}
  \xi(t)
   :=
  X(t)-X(0)
  =
  (\re^{tA}-I_n)X(0) + \psi(t)b + Z(t).
\end{equation}
Its typical ``size''
can be quantified by a weighted mean-square functional
\begin{equation}
\label{Del}
    \Delta(t)
     :=
    \bE(\xi(t)^\rT \Sigma\xi(t))
    =
    \bra \Sigma,  \Re \Ups(t)\ket,
\end{equation}
where $0 \preccurlyeq\Sigma = \Sigma^\rT \in \mR^{n\x n}$ is a given weighting matrix which specifies the relative importance of the system variables, and  $\bra\cdot, \cdot\ket $ is   the Frobenius inner product generating the Frobenius norm $\|\cdot\| $ of matrices \cite{HJ_2007}. Here, use is made of the second-moment matrix
\begin{equation}
\label{Ups}
    \Ups(t)
    :=
    \bE (\xi(t)\xi(t)^\rT)
    = \Ups(t)^* \succcurlyeq 0,
\end{equation}
which enters (\ref{Del}) only through its real part because of the orthogonality between the subspaces of symmetric and antisymmetric matrices, whereby $\bra \Sigma, \Im \Ups(t)\ket  = 0$.  Without loss of generality, the weighting matrix can be  factorized as
\begin{equation}
\label{FF}
    \Sigma := F^\rT F,
    \qquad
    F \in \mR^{\nu\x n},
    \qquad
    \nu:= \rank \Sigma \< n,
\end{equation}
so that $\Delta(t)$ in  (\ref{Del}) ``penalizes'' $\nu$ independent linear combinations of the system variables of interest specified by real coefficients comprising the rows of the full row rank matrix $F$.
By combining (\ref{EZX0}) with  (\ref{xi}), the matrix (\ref{Ups}) takes the form
\begin{align}
\nonumber
    \Ups(t)
    = &
    (\re^{tA}-I_n)
    \Pi
    (\re^{tA^\rT}-I_n)\\
\nonumber
    & +
    (\re^{tA}-I_n)\mu(0)b^\rT \psi(t)^\rT
    +
    \psi(t)b \mu(0)^\rT (\re^{tA^\rT}-I_n)\\
\label{Exixi}
    & +
    \psi(t)bb^\rT \psi(t)^\rT +
    V(t).
\end{align}
Here, $\Pi$ is the second-moment matrix of the initial system variables, computed by averaging both sides of (\ref{Xalg}) and using (\ref{Theta})--(\ref{cdot}) along with (\ref{mu}) as
\begin{align}
\label{EXX0}
    \Pi
    & :=
    \bE (X(0)X(0)^\rT)
    =
    P + i\Theta\cdot \mu(0),\\
\label{P}
    P
    & := \Re \Pi = \alpha + \Re \beta\cdot \mu(0).
\end{align}
In (\ref{Exixi}),  use has also been made of the second-moment matrix
\begin{equation}
\label{V}
    V(t)
      :=
    \bE(Z(t)Z(t)^\rT)
    =
    \int_0^t
    \re^{(t-s)A}
    \Lambda(s)
    \re^{(t-s)A^\rT}
    \rd s,
\end{equation}
where $\Lambda$ describes the averaged quantum Ito matrix for the diffusion term $B(X)\rd W$ in (\ref{dXdY}) and is computed by combining (\ref{BX}) with the antisymmetry of the matrices (\ref{Thetaell}):
\begin{align}
\nonumber
    \Lambda(t)
    & := \bE(B(X(t))\Omega B(X(t))^\rT)\\
\nonumber
    & =
    -4
    \sum_{j,k=1}^n
    \bE
    (X_j(t)X_k(t))
    \Theta_j M^\rT\Omega M\Theta_k\\
\label{EBB}
    & =
    4\mho^\rT ((\alpha + \beta\cdot \mu(t))\ox  (M^\rT\Omega M))\mho
\end{align}
(cf. \cite[Eq. (4.38)]{VP_2022_SIAM}), with
$\Omega$ the quantum Ito matrix of $W$ in (\ref{Omega}) and the matrix $\mho$ from (\ref{colA0}). The function $V$  in (\ref{V})
satisfies the Lyapunov ODE
\begin{equation}
\label{VODE}
    \dot{V}(t) = AV(t) + V(t)A^\rT + \Lambda(t),
\end{equation}
with the zero initial condition $V(0) = 0$. Hence, its first two derivatives at $t=0$ are given by
\begin{equation}
\label{Vder}
  \dot{V}(0) = \Lambda(0),
  \qquad
  \ddot{V}(0) = A \Lambda(0) + \Lambda(0) A^\rT + \dot{\Lambda}(0),
\end{equation}
where
\begin{equation}
\label{Lamdot}
  \dot{\Lambda}(0)
  =
  4\mho^\rT ((\beta\cdot (A\mu(0)+b))\ox  (M^\rT\Omega M))\mho
\end{equation}
in view of (\ref{EBB}) and since the mean vector (\ref{mu}) satisfies the ODE $\dot{\mu} = A\mu + b$.
By substituting (\ref{FF})--(\ref{P}) into (\ref{Del}), the mean square deviation takes the form
\begin{align}
\nonumber
    \Delta(t)
    = &
    \|F(\re^{tA}-I_n)
    \sqrt{P}\| ^2
    +
    2b^\rT \psi(t)^\rT\Sigma (\re^{tA}-I_n)\mu(0)\\
\label{ExiTxi}
    & +
    |F\psi(t)b|^2
    +
    \bra \Sigma, \Re V(t)\ket .
\end{align}
The Taylor series expansion of $\Delta$, truncated to its quadratic part, is  given by
\begin{equation}
\label{Delasy0}
    \Delta(t)
     =
     \dot{\Delta}(0)
     t +
     \frac{1}{2}
     \ddot{\Delta}(0)
     t^2
     +
     O(t^3),
     \qquad
     {\rm as}\
     t \to 0+,
\end{equation}
where the first two derivatives of $\Delta$ from (\ref{ExiTxi}) at $t=0$  are computed as
\begin{align}
\label{Deldot}
  \dot{\Delta}(0)
  =&
  \bra
    \Sigma, \Re \dot{\Ups}(0)
  \ket
    =
  \bra
    \Sigma, \Re \Lambda(0)
  \ket,\\
\nonumber
   \ddot{\Delta}(0)
   =&
  \bra
    \Sigma, \Re \ddot{\Ups}(0)
  \ket
    =
   2\|F A \sqrt{P}\|^2
   + 4b^\rT \Sigma A \mu(0)
   +
   2|Fb|^2\\
\nonumber
   & +
      \bra
        \Sigma,
        A \Re \Lambda(0) + \Re \Lambda(0) A^\rT + \Re \dot{\Lambda}(0)
   \ket\\
\label{Delddot}
    = &
      \bra
        \Sigma,
        2 APA^\rT +
        2A( \Re \Lambda(0) +2\mu(0)b^\rT) + \Re \dot{\Lambda}(0)
   \ket
   +
   2|Fb|^2
\end{align}
by using (\ref{psi}), (\ref{Vder}), (\ref{Lamdot}). While the short-time expansion (\ref{Delasy0}) along with the relations (\ref{Deldot}), (\ref{Delddot}) are valid regardless of the spectral properties of the matrix $A$, the asymptotic behaviour of (\ref{ExiTxi}) at the other extreme, as $t\to +\infty$, depends on whether the system is dissipative. More precisely, if $A$ is Hurwitz, then $\lim_{t\to +\infty}\re^{tA} = 0$ and $\lim_{t\to +\infty}\psi(t) = -A^{-1}$ in (\ref{psi}),  and hence, (\ref{mu*}), (\ref{V}), (\ref{ExiTxi}) imply that
$
    \lim_{t\to +\infty}
    \Delta(t)
    =
    \|F\sqrt{P}\| ^2
    -
    2\mu_\infty^\rT\Sigma \mu(0)
    +
    |F\mu_\infty|^2
    +
    \bra \Sigma, P_\infty\ket
$.
Here,
$
    P_\infty
    :=
    \int_0^{+\infty}
    \re^{tA}
    \Re \Lambda_\infty
    \re^{tA^\rT}
    \rd t
$
is the infinite-horizon controllability Gramian of the pair $(A,\sqrt{\Re \Lambda_\infty})$ satisfying the algebraic Lyapunov equation (ALE) $AP_\infty + P_\infty A^\rT + \Re \Lambda_\infty = 0$ obtained from a steady-state version of (\ref{VODE}) using the limit
$
    \Lambda_\infty
    :=
    \lim_{t\to +\infty}
    \Lambda(t)
    =
4\mho^\rT ((\alpha + \beta\cdot \mu_\infty)\ox  (M^\rT\Omega M))\mho
$ of (\ref{EBB}).

\section{QUANTUM MEMORY DECOHERENCE TIME}
\label{sec:time}

The quantum system performance in retaining the initial variables   can be described in terms of a ``memory decoherence'' time during which the system variables do not deviate too far from their initial values. Similarly to \cite{VP_2023_ANZCC},  the weighted mean-square deviation approach of (\ref{Del}) suggests   such time to be defined as
\begin{equation}
\label{tau}
    \tau(\eps)
    :=
    \inf
    \big\{t\> 0:\
    \Delta(t)
    >  \eps \|F\sqrt{P}\| ^2
    \big\},
\end{equation}
with the convention $\inf \emptyset := +\infty$. Here, $\eps>0$ is a small dimensionless parameter specifying the relative error threshold for the mean-square deviation $\Delta(t)$  of $X(t)$ from $X(0)$ with respect to a reference quantity
\begin{equation}
\label{ref}
    \bE (X(0)^\rT \Sigma X(0)) = \bra \Sigma, P\ket = \|F\sqrt{P}\|^2,
\end{equation}
which uses (\ref{EXX0}), (\ref{P}). To eliminate from consideration the trivial case of $F\sqrt{P} = 0$ (which is possible if (\ref{FF}) holds with $\nu< n$, when the matrix $F$ has linearly dependent columns), we assume that
\begin{equation}
\label{FP}
  F \sqrt{P}\ne 0.
\end{equation}
The memory decoherence time $\tau(\eps)$  in (\ref{tau})  is a nondecreasing function of the fidelity parameter $\eps>0$ satisfying $\tau(0):= \lim_{\eps\to 0+}\tau(\eps) =0$. Furthermore, $\tau(\eps)>0$ for any $\eps>0$ since $\Delta(t)$ is a continuous function of  $t\> 0$,  with $\Delta(0) = 0$. 
From
(\ref{Delasy0}), it follows  that $\tau(\eps)$ is asymptotically linear, as $\eps\to 0+$,  and its right-hand  derivative at $\eps=0$ is given by
\begin{equation}
\label{ratio}
    \tau'(0)
    :=
    \lim_{\eps\to 0+}
    \frac{\tau(\eps)}{\eps}
    =
    \frac{\|F\sqrt{P}\| ^2}{\dot{\Delta}(0)}
    =
    \frac{\|F\sqrt{P}\| ^2}{  \bra\Sigma, \Re\Lambda(0)\ket}.
\end{equation}
Here, the denominator $\dot{\Delta}(0)$, which is found in (\ref{Deldot}) and is always nonnegative,  is also assumed to be nonzero in what follows:
\begin{equation}
\label{pos}
  \bra\Sigma, \Re\Lambda(0)\ket >0.
\end{equation}
The quantity $\tau'(0)$ resembles the signal-to-noise ratio since the numerator in (\ref{ratio}) pertains to the initial condition $X(0)$ as a useful ``signal'' to be stored (see (\ref{ref})), while the denominator involves the diffusion matrix $\Lambda(0)$  from (\ref{EBB}) associated with the quantum noise. However, note that $\Lambda(0)$ depends on $P$ as well  and that $\tau'(0)$  in (\ref{ratio}) has the physical dimension of time. By using both leading terms from (\ref{Delasy0}), the asymptotic relation (\ref{ratio}) can now be extended to
\begin{equation}
\label{tau12}
    \tau(\eps) =
      \wh{\tau}(\eps)
     + O(\eps^3),
     \qquad
       \wh{\tau}(\eps)
  :=
  \tau'(0) \eps + \frac{1}{2}\tau''(0) \eps^2,
\end{equation}
as $\eps \to 0+$, where the second-order right-hand derivative $\tau''(0)$  of $\tau$ at $\eps=0$ is obtained by matching the truncated expansions as
\begin{equation}
\label{tau2}
    \tau''(0)
      =
    -
    \frac{\ddot{\Delta}(0) \tau'(0)^2}
    {\dot{\Delta}(0)}
    =
    -
    \frac{\ddot{\Delta}(0) \|F\sqrt{P}\|^4}
    {\bra \Sigma, \Re \Lambda(0)\ket^3}   ,
\end{equation}
with $\ddot{\Delta}(0)$ given by (\ref{Delddot}). In contrast to $\tau'(0)$ in (\ref{ratio}), the quantity $\tau''(0)$ depends not only on the coupling matrix $M$, but also  on the energy and coupling vectors  $E$, $N$ from (\ref{HE}), (\ref{LMN})   through the matrix $A$ in (\ref{A_b}) which enters (\ref{tau2}) only through $\ddot{\Delta}(0)$ from (\ref{Delddot}).

A relevant performance criterion    for the quantum memory system   is provided by the maximization of the decoherence time (\ref{tau}) (or its second-order approximation $\wh{\tau}(\eps)$
in (\ref{tau12})) at a fixed fidelity level $\eps$:
\begin{equation}
\label{taumax}
  \tau(\eps)\longrightarrow \sup.
\end{equation}
The resulting optimization problem is over those energy and coupling parameters $E$, $M$, $N$ of the system which are allowed to be varied in a particular setup.
The matrix $F$ (and hence, $\Sigma$) in (\ref{FF}) can be an additional parameter over which $\tau(\eps)$ is maximized. For example, $F$ can be varied so as to find a particular subset  of the system variables $X_1, \ldots, X_n$ (or their linear combinations) which are retained with the given relative accuracy $\eps$  for a longer period of time (\ref{tau}) than the others.

\section{DECOHERENCE TIME SUBOPTIMIZATION}
\label{sec:subopt}

Since the memory decoherence time $\tau(\eps)$ in (\ref{tau}) is a complicated function of the energy, coupling and weighting parameters, we will consider an  
approximate version of the optimization problem (\ref{taumax}),    which makes advantage of the smallness of $\eps$ and the following convexity properties with respect to  the energy vector $E$.
Note that the second-order approximation $\wh{\tau}(\eps)$ of  $\tau(\eps)$ in (\ref{tau12}) depends on the matrix $A$ in a concave quadratic fashion, inheriting this property from $\tau''(0)$ in (\ref{tau2}) since $\ddot{\Delta}(0)$ in (\ref{Delddot}) is a convex quadratic function of $A$. Due to $A$ in (\ref{A_b}) depending affinely on the energy vector   $E$ in view of  (\ref{A0}), (\ref{At}), $\ddot{\Delta}(0)$ is a convex quadratic function of $E$. Therefore, at least asymptotically,     for small values of $\eps$,  the maximization (\ref{taumax}) tends to favour ``localized'' values  of $E$ as specified  by a suboptimal solution of this problem below.

\begin{thm}
\label{th:R}
Suppose the fidelity level $\eps$ in (\ref{tau}) and the  weighting matrix $\Sigma$  in (\ref{FF}) are fixed together with the matrix $P$ in (\ref{P}), and the conditions  (\ref{FP}), (\ref{pos}) are satisfied.  Then for any given coupling parameters $M$, $N$ of the system (\ref{dXdY}) in (\ref{LMN}),  the energy vector $E$ in (\ref{HE})  delivers a solution to the problem
\begin{equation}
\label{tauhatmax}
  \wh{\tau} \longrightarrow \sup,
  \qquad
  E \in \mR^n,
\end{equation}
of maximizing the approximate decoherence time $\wh{\tau}(\eps)$ in
(\ref{tau12}) if and only if
\begin{equation}
\label{REK}
  2RE + K = 0.
\end{equation}
Here, $0\preccurlyeq R = R^\rT \in \mR^{n\x n}$ and $K\in \mR^n$ are auxiliary matrix and vector computed as
\begin{align}
\label{R}
  R  :=& \mho^\rT(P\ox \Sigma) \mho, \\
\nonumber
    K
    := &
   \mho^\rT
   \col
    \Big(\Sigma
    \Big(\wt{A}P + \frac{1}{2} \Re \Lambda(0) +b\mu(0)^\rT
    \Big)\Big)\\
\label{K}
    &+
   (\bra
        \mho \Sigma \mho^\rT,
         \Re ((\beta\cdot (\Theta \cdot \mu(0))_{\bullet k} )\ox  (M^\rT\Omega M))
   \ket)_{1\< k\< n}
\end{align}
using the structure constants from (\ref{Xalg}), the CCR array $\Theta$ in   (\ref{Theta})   and the matrix $\mho$ from (\ref{colA0}) along with (\ref{A_b}), (\ref{At}),  (\ref{EBB}).\hfill$\square$
\end{thm}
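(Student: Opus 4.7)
The plan is to reduce the supremum in (\ref{tauhatmax}) to a convex quadratic minimization in $E\in\mR^n$, and then to identify the stationarity condition by explicit matrix differentiation. Since the linear coefficient $\tau'(0)$ of $\hat\tau(\eps)$ in (\ref{tau12}) and the positive factor $\|F\sqrt{P}\|^4/\bra\Sigma,\Re\Lambda(0)\ket^3$ multiplying $-\ddot\Delta(0)$ in (\ref{tau2}) do not depend on $E$ (the energy vector enters the dynamics only through $A_0$ in $A$ from (\ref{A_b}), whereas $P$, $b$ and $\Lambda(0)$ are $E$-free), the problem (\ref{tauhatmax}) is equivalent to minimizing $\ddot\Delta(0)$ from (\ref{Delddot}) over $E$. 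This function is a convex quadratic of $E$, as already observed in the paragraph preceding the theorem, so a first-order condition will be both necessary and sufficient for a global minimizer.

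To compute $\nabla_E\ddot\Delta(0)$, I would use the decomposition $A = A_0 + \wt A$ of (\ref{A_b}) together with the vectorization $\col(A_0) = 2\mho E$ of (\ref{colA0}), so that $\partial \col(A)/\partial E = 2\mho$, while $\wt A$, $b$, $\Lambda(0)$ are constant in $E$. The three $E$-dependent terms of $\ddot\Delta(0)$ are then differentiated in turn. For $2\bra\Sigma, APA^\rT\ket$, the standard identity $\partial\Tr(\Sigma APA^\rT)/\partial A = 2\Sigma AP$ yields the gradient contribution $8\mho^\rT\col(\Sigma AP)$; splitting $A = A_0 + \wt A$ and using $\col(\Sigma A_0 P) = 2(P\ox\Sigma)\mho E$ extracts the quadratic-in-$E$ part $16RE$, with $R$ as in (\ref{R}), and a constant part $8\mho^\rT\col(\Sigma\wt A P)$. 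For $2\bra\Sigma, A(\Re\Lambda(0) + 2\mu(0)b^\rT)\ket$, the symmetry $\Re\Lambda(0)^\rT = \Re\Lambda(0)$ produces the contribution $4\mho^\rT\col(\Sigma\Re\Lambda(0)) + 8\mho^\rT\col(\Sigma b\mu(0)^\rT)$. For $\bra\Sigma, \Re\dot\Lambda(0)\ket$ from (\ref{Lamdot}), the identity $(\gamma\diam v)u = (\gamma\cdot u)v$ of (\ref{cdotdiam}) gives $A_0\mu(0) = 2(\Theta\cdot\mu(0))E$, which reveals the linear-in-$E$ dependence $\beta\cdot A\mu(0) = 2\sum_{k=1}^n E_k (\beta\cdot(\Theta\cdot\mu(0))_{\bullet k}) + \mathrm{const}$; inserting this into the Kronecker product in (\ref{Lamdot}) and cycling $\mho\Sigma\mho^\rT$ across the Frobenius pairing yields $8$ times the second line of (\ref{K}).

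Collecting these three contributions gives $\nabla_E\ddot\Delta(0) = 16 RE + 8K$, whose vanishing is exactly the condition (\ref{REK}); convexity of $\ddot\Delta(0)$ in $E$, which can also be checked directly from $R = \mho^\rT(P\ox\Sigma)\mho\succcurlyeq 0$ (since $P\succcurlyeq 0$ and $\Sigma\succcurlyeq 0$ imply $P\ox\Sigma\succcurlyeq 0$), makes (\ref{REK}) both necessary and sufficient. I expect the main obstacle to be the bookkeeping for the $\dot\Lambda(0)$ piece: resolving the $E$-dependence inside $\beta\cdot A\mu(0)$ column by column, commuting $\mho$ and $\Re$ through the Kronecker structure, and matching the resulting coordinate-wise expression to the second line of (\ref{K}). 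The other two gradient computations are routine matrix calculus once the vectorization $\col(A_0) = 2\mho E$ is in place.
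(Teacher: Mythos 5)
Your proposal is correct and follows essentially the same route as the paper's proof: reducing (\ref{tauhatmax}) to the convex quadratic minimization of $\ddot{\Delta}(0)$ over $E$ via the $E$-independence of $\tau'(0)$ and the prefactor in (\ref{tau2}), then computing $\d_E \ddot{\Delta}(0) = 16RE + 8K$ term by term using $\col(A_0) = 2\mho E$ and the identity $A_0\mu(0) = 2(\Theta\cdot\mu(0))E$. Your direct verification that $R = \mho^\rT(P\ox\Sigma)\mho \succcurlyeq 0$ is a slight (correct) variation on the paper's convexity argument, but the substance is the same.
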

\begin{proof}
Since the coefficient $\tau'(0)$ in (\ref{ratio}) does not depend on $E$, then $\sup_{E  \in \mR^n} \wh{\tau}(\eps) = \tau'(0) \eps + \frac{1}{2} \eps^2 \sup_{E\in \mR^n} \tau''(0)$ in view of (\ref{tau12}). This makes (\ref{tauhatmax}) equivalent to maximizing $\tau''(0)$ in (\ref{tau2}) over $E$, which reduces to
\begin{equation}
\label{Delmin}
  \ddot{\Delta}(0)
  \longrightarrow \inf,
  \qquad
  E\in \mR^n.
\end{equation}
From the convex dependence of $\ddot{\Delta}(0)$ on the matrix $A$ in (\ref{Delddot}) and the affine dependence of $A$ on $E$ in (\ref{A_b}) through (\ref{A0}), it follows that (\ref{Delmin}) is a convex minimization problem. Hence, the first-order optimality condition \begin{equation}
\label{opt}
    \d_E \ddot{\Delta}(0) = 0
\end{equation}
on the gradient of $\ddot{\Delta}(0)$ with respect to the energy vector $E:= (E_k)_{1\< k\< n}$
is necessary and sufficient for $E$ to deliver a global  minimum to $\ddot{\Delta}(0)$. We will now compute this gradient by differentiating the terms on the right-hand side of (\ref{Delddot}).  Since the matrix $\wt{A}$ in (\ref{At}) does not depend on $E$, then the first variation of the matrix $A$ in (\ref{A_b}) with respect to $E$ reduces to that of the matrix $A_0$ in (\ref{A0}) as $\delta A = \delta A_0 = 2\Theta \diam \delta E$. Hence,
\begin{align}
\nonumber
    \delta
    \bra
        \Sigma,
        APA^\rT
    \ket
      & =
    \bra
        \Sigma,
        (\delta A_0)PA^\rT + A P\delta A_0^\rT
    \ket
    =
    2
    \bra
        \Sigma AP,
        \delta A_0
    \ket\\
\nonumber
    & =
    2
        \col(\Sigma AP)^\rT
        \delta \col(A_0)\\
\label{deltaAA}
        & =
    4
        \col(\Sigma AP)^\rT
        \mho \delta E
\end{align}
in view of the vectorization (\ref{colA0}). The relation (\ref{deltaAA}), combined with the first equality in (\ref{A_b}),
implies that
\begin{align}
\nonumber
        \d_E
        \bra
        \Sigma,
        APA^\rT
    \ket
    & =
    4 \mho^\rT \col(\Sigma AP)\\
\nonumber
    & =
    4 \mho^\rT ((P\ox \Sigma) \col(A_0) +  \col(\Sigma \wt{A}P))\\
\label{dEAA}
    & =
    4  (2R E  +  \mho^\rT \col(\Sigma \wt{A}P)),
\end{align}
where $R$ is the matrix from (\ref{R}).
Similarly to (\ref{deltaAA}), since the matrix $\Lambda(0)= 4\mho^\rT ((\alpha + \beta\cdot \mu(0))\ox  (M^\rT\Omega M))\mho$ in (\ref{EBB}) and the vectors $b$ in (\ref{A_b}) and $\mu(0)$ in (\ref{mu})   do not depend on $E$, then
$
    \delta
      \bra
        \Sigma,
        A ( \Re \Lambda(0) +2\mu(0)b^\rT)
   \ket
    =
      \bra
        \Sigma ( \Re \Lambda(0) +2b\mu(0)^\rT),
        \delta A_0
   \ket
    =
   2
   \col(\Sigma ( \Re \Lambda(0) +2b\mu(0)^\rT))^\rT
   \mho
   \delta E
$,
which yields
\begin{equation}
\label{dEA}
    \d_E
      \bra
        \Sigma,
        A ( \Re \Lambda(0) +2\mu(0)b^\rT)
   \ket
   =
   2
   \mho^\rT \col(\Sigma ( \Re \Lambda(0) +2b\mu(0)^\rT)).
\end{equation}
From the relation  $A_0 \mu(0) = 2(\Theta \diam E) \mu(0) = 2(\Theta \cdot \mu(0))E$, obtained  by applying the identity (\ref{cdotdiam}) to (\ref{A0}), it follows that
\begin{equation}
\label{dEkAmu}
    \d_{E_k} (A \mu(0))
    =
    \d_{E_k} (A_0 \mu(0))
    =
     2(\Theta \cdot \mu(0))_{\bullet k}
\end{equation}
for all $k = 1, \ldots, n$. A combination of (\ref{dEkAmu}) with
(\ref{Lamdot}) leads to
$
    \d_{E_k} \dot{\Lambda}(0)
     =
    4\mho^\rT ((\beta\cdot \d_{E_k}(A_0\mu(0)))\ox  (M^\rT\Omega M))\mho
     =
    8\mho^\rT ((\beta\cdot (\Theta \cdot \mu(0))_{\bullet k} )\ox  (M^\rT\Omega M))\mho
$,
and hence,
\begin{equation}
\label{dE1}
    \d_E
      \bra
        \Sigma,
        \Re \dot{\Lambda}(0)
   \ket
   =
   8
   (\bra
        \mho \Sigma \mho^\rT,
         \Re ((\beta\cdot (\Theta \cdot \mu(0))_{\bullet k} )\ox  (M^\rT\Omega M))
   \ket)_{1\< k\< n}.
\end{equation}
Since the term $|Fb|^2$ in (\ref{Delddot}) is independent of $E$, then by assembling (\ref{dEAA}), (\ref{dEA}), (\ref{dE1}), it follows that
\begin{align}
\nonumber
    \d_E \ddot{\Delta}(0)
    =&
    8  (2R E  +  \mho^\rT \col(\Sigma \wt{A}P))\\
\nonumber
    & +
       4
   \mho^\rT \col(\Sigma ( \Re \Lambda(0) +2b\mu(0)^\rT))\\
\nonumber
    & +
   8
   (\bra
        \mho \Sigma \mho^\rT,
         \Re ((\beta\cdot (\Theta \cdot \mu(0))_{\bullet k} )\ox  (M^\rT\Omega M))
   \ket)_{1\< k\< n}       \\
\label{8REK}
    =&
    8(2RE + K),
\end{align}
where use is made of (\ref{R}), (\ref{K}). In view of (\ref{8REK}), the condition (\ref{opt}) is equivalent to (\ref{REK}), thus establishing the latter as a necessary and sufficient condition of optimality for the problem (\ref{tauhatmax}).
\end{proof}

By (\ref{8REK}), the matrix $R$ in (\ref{R}) is related to the Hessian matrix of the convex quadratic function $E\mapsto \ddot{\Delta}(0)$ as $\d_E^2 \ddot{\Delta}(0) = 16R$.
In view of (\ref{REK}), the zero energy vector $E=0$ is an optimal solution of the problem (\ref{tauhatmax}) (and thus a suboptimal solution of (\ref{taumax}) for small values of $\eps$) if and only if the vector $K$ in (\ref{K}) vanishes.
The condition $K=0$ is a quadratic constraint on the coupling parameters $M$, $N$ under which $E=0$ is beneficial for maximizing the memory decoherence time of the system in the framework of the approximation $\tau(\eps)\approx \wh{\tau}(\eps)$ in  (\ref{tau12}). Beyond the zero-Hamiltonian case, if the matrix $R$ in (\ref{R}) is positive definite, then the optimal  value of $E$ in the problem (\ref{tauhatmax}) is uniquely found from (\ref{REK}) as $E = -\frac{1}{2}R^{-1} K$. The above setting provides one of possible decoherence time control formulations which can also be considered for interconnections of finite-level systems,  
similar to those of open quantum harmonic oscillators in \cite{VP_2023_SCL,VP_2023_ANZCC}.

\section{DECOHERENCE TIME CONTROL BY DIRECT ENERGY COUPLING}
\label{sec:two}

The decoherence time optimization for a quantum system with an algebraic structure is applicable to interconnections of such systems.
As an illustration, consider two systems (for example, interpreted as a plant and a controller) from \cite[Section~9]{VP_2022},  which have a direct energy coupling between them and interact with external bosonic fields;  see
Fig.~\ref{fig:system}.
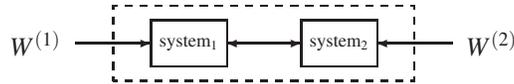
\begin{figure}[htbp]
\centering
\unitlength=1mm
\linethickness{0.5pt}
\begin{picture}(50.00,8)
    \put(5,-2){\dashbox(40,10)[cc]{}}
    \put(10,0){\framebox(10,6)[cc]{\scriptsize system${}_1$}}
    \put(30,0){\framebox(10,6)[cc]{\scriptsize system${}_2$}}
    \put(0,3){\vector(1,0){10}}
    \put(50,3){\vector(-1,0){10}}
    \put(20,3){\vector(1,0){10}}
    \put(30,3){\vector(-1,0){10}}
    \put(-2,3){\makebox(0,0)[rc]{$W^{(1)}$}}
    \put(52,3){\makebox(0,0)[lc]{$W^{(2)}$}}
\end{picture}
\caption{
    An interconnection of two quantum systems, 
    which have external input quantum Wiener processes  $W^{(1)}$, $W^{(2)}$  and interact with each other through a direct  energy coupling.
}
\label{fig:system}
\end{figure}
The external fields  are modelled by
quantum Wiener processes  $W^{(1)}$, $W^{(2)}$ (of even dimensions $m_1$, $m_2$)  on symmetric Fock spaces $\fF_1$, $\fF_2$, respectively. They form an augmented quantum Wiener process
$    W
    :=
    {\scriptsize\begin{bmatrix}
      W^{(1)}\\
      W^{(2)}
    \end{bmatrix}}
$ of dimension $m:= m_1+m_2$
on the composite Fock space $\fF:= \fF_1\ox \fF_2$ with the quantum Ito matrix $    \Omega
    =
    {\scriptsize\begin{bmatrix}
      \Omega_1 & 0\\
      0 & \Omega_2
    \end{bmatrix}}$ in (\ref{Omega}) and the individual Ito tables $
    \rd W^{(s)}\rd W^{(s)}{}^{\rT} = \Omega_s \rd t$. Here,
$
    \Omega_s
     := I_{m_s} + iJ_s$ and
$
    J_s:= I_{m_s/2} \ox \bJ
$, 
with the matrix
$\bJ$ from (\ref{bJ}), so that
$
    J=
    {\scriptsize\begin{bmatrix}
      J_1 & 0\\
      0 & J_2
    \end{bmatrix}}$ in accordance with $W^{(1)}$, $W^{(2)}$ commuting with  each other.
The systems have initial spaces $\fH_s$ and vectors $X^{(s)}$ of $n_s$ dynamic variables on the composite system-field space $\fH:= \fH_0\ox \fF$, with $\fH_0:= \fH_1 \ox \fH_2$ (so that $[X^{(1)}, X^{(2)}{}^\rT] = 0$) and the algebraic structure
\begin{equation}
\label{XXk}
    X^{(s)}X^{(s)\rT}
    =
    \alpha^{(s)} + \beta^{(s)} \cdot X^{(s)},
    \qquad
    s = 1, 2,
\end{equation}
along with the CCR arrays $\Theta^{(s)}:= \Im \beta^{(s)} $.
The system interconnection is endowed with an augmented vector $X$ of $n:= n_1 + n_2 + n_1n_2$ quantum variables (cf. \cite[Example~2]{EMPUJ_2016}):
\begin{equation}
\label{Xvec}
    X:=
    {\begin{bmatrix}
      X^{(1)}\\
      X^{(2)}\\
      X^{(12)}
    \end{bmatrix}},
    \qquad
    X^{(12)} := X^{(1)}\ox X^{(2)},
\end{equation}
which has an algebraic structure (\ref{Xalg}), with the joint structure constants in $\alpha$, $\beta$   computed in \cite[Lemma 9.1]{VP_2022}  in terms of the individual constants from (\ref{XXk}), so that, for example,     $\alpha
    =
    {\scriptsize\begin{bmatrix}
      \alpha^{(1)} & 0 & 0\\
      0 & \alpha^{(2)} & 0\\
      0 & 0 & \alpha^{(1)}\ox \alpha^{(2)}
    \end{bmatrix}}$. Accordingly, the augmented energy vector is partitioned as
\begin{equation}
\label{EEE}
    E:=
    {\begin{bmatrix}
      E^{(1)}\\
      E^{(2)}\\
      E^{(12)}
    \end{bmatrix}},
\end{equation}
where $E^{(s)} \in \mR^{n_s}$ are the individual energy vectors, and $E^{(12)} \in \mR^{n_1n_2}$ parameterizes the  direct coupling Hamiltonian $H_{12}:= E^{(12)}{}^\rT X^{(12)}$ in the total Hamiltonian $H:= H_1 + H_2 + H_{12}$, where $H_s:= E^{(s)}{}^\rT X^{(s)}$. Also, the parameters of coupling of the interconnected system to the augmented quantum Wiener process $W$ are given by
\begin{equation}
\label{LMN1}
    M:=
    {\begin{bmatrix}
      M^{(1)} & 0 & 0\\
      0 & M^{(2)} & 0
    \end{bmatrix}},
    \qquad
    N
    :=
    {\begin{bmatrix}
      N^{(1)}\\
      N^{(2)}
    \end{bmatrix}}
\end{equation}
in terms of the individual coupling parameters $M^{(s)} \in \mR^{m_s\x n_s}$, $N^{(s)} \in \mR^{m_s}$. The resulting $A$, $b$, $B(X)$ in the first QSDE from (\ref{dXdY}) for the augmented system are computed in \cite[Theorem 9.2]{VP_2022}.

Now, if the energy and coupling parameters of the constituent systems in Fig.~\ref{fig:system} are fixed, while the direct energy coupling vector $E^{(12)}$   can be varied, then the latter can be found by solving the approximate decoherence time maximization problem
\begin{equation}
\label{tauhatmaxR12}
  \wh{\tau} \longrightarrow \sup,
    \qquad
  E^{(12)}\in \mR^{n_1n_2},
\end{equation}
which is a reduced version of (\ref{tauhatmax}),  with the individual energy vectors $E^{(s)}$ in (\ref{EEE}) being fixed. A solution of this decoherence time control problem is provided below as a corollary of Theorem~\ref{th:R}. To this end, we partition the matrix $R$ in (\ref{R}) and the vector $K$ in (\ref{K}) for the augmented system as
\begin{equation}
\label{RK12}
  R :=
  {\begin{bmatrix}
    * & * & * \\
    * & * & * \\
    R_1 & R_2 & R_{12}
  \end{bmatrix}},
  \qquad
  K:=
  {\begin{bmatrix}
    * \\
    * \\
    K_{12}
  \end{bmatrix}}  ,
\end{equation}
where $R_s\in \mR^{n_1n_2\x n_s}$, $0 \preccurlyeq R_{12}= R_{12}^\rT\in  \mR^{n_1n_2\x n_1n_2}$, $K_{12}\in  \mR^{n_1n_2}$, and the ``$*$''s denote irrelevant blocks. They are independent of the energy vector $E$, inheriting this property from $R$, $K$.

\begin{thm}
\label{th:R12}
Suppose the system interconnection in Fig.~\ref{fig:system},  specified by (\ref{XXk})--(\ref{LMN1}),  satisfies the assumptions of Theorem~\ref{th:R}. Then the direct energy coupling vector $E^{(12)}$ in (\ref{EEE}) is a solution of the problem (\ref{tauhatmaxR12}) with the approximate decoherence time $\wh{\tau}$ in
(\ref{tau12}) for the augmented system if and only if
\begin{equation}
\label{E12opt}
  2 R_{12} E^{(12)} + Q = 0.
\end{equation}
Here, $Q \in \mR^{n_1 n_2}$ is an auxiliary vector which is computed as
\begin{equation}
\label{Q}
  Q := K_{12} + 2\sum_{s=1}^2 R_s E^{(s)}
\end{equation}
in terms of (\ref{RK12}). \hfill$\square$ 
\end{thm}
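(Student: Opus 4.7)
The plan is to derive Theorem~\ref{th:R12} as a direct corollary of Theorem~\ref{th:R} by isolating the block of the full optimality condition $2RE+K=0$ that corresponds to $E^{(12)}$, exploiting the fact that the reduced problem (\ref{tauhatmaxR12}) is a restriction of (\ref{tauhatmax}) to a coordinate subspace.

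First, I would note that the matrix $A$ in (\ref{A_b}) for the augmented system depends affinely on the full energy vector $E$ in (\ref{EEE}), and hence affinely on the sub-vector $E^{(12)}$ when the individual energies $E^{(1)}$, $E^{(2)}$ are held fixed. Consequently, $\ddot{\Delta}(0)$ in (\ref{Delddot}) remains a convex quadratic function of $E^{(12)}$, which, via (\ref{tau12}), (\ref{tau2}), makes (\ref{tauhatmaxR12}) equivalent to minimizing $\ddot{\Delta}(0)$ over $E^{(12)} \in \mR^{n_1n_2}$. The first-order condition $\d_{E^{(12)}}\ddot{\Delta}(0)=0$ is therefore necessary and sufficient for optimality.

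Next, I would obtain $\d_{E^{(12)}}\ddot{\Delta}(0)$ from (\ref{8REK}) simply by reading off the last $n_1n_2$ entries of the full gradient $\d_E \ddot{\Delta}(0) = 8(2RE+K)$. Using the block partitions in (\ref{EEE}), (\ref{RK12}), together with the symmetry $R=R^\rT$ (which identifies the bottom block row $[R_1,\ R_2,\ R_{12}]$ with the transpose of the rightmost block column of $R$), the $E^{(12)}$-block of $RE$ equals
\begin{equation*}
  R_1 E^{(1)} + R_2 E^{(2)} + R_{12} E^{(12)}.
\end{equation*}
Setting $8$ times twice this block plus $8K_{12}$ to zero and rearranging yields
\begin{equation*}
  2R_{12} E^{(12)} + K_{12} + 2\sum_{s=1}^2 R_s E^{(s)} = 0,
\end{equation*}
which is precisely (\ref{E12opt}) with $Q$ defined in (\ref{Q}).

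There is no serious obstacle: the argument is essentially bookkeeping, with the only conceptual point being that convexity of $\ddot{\Delta}(0)$ in the full $E$ is inherited by any affine section, so restricting the stationarity condition of Theorem~\ref{th:R} to the $E^{(12)}$-coordinates still characterizes global optima. A minor care point is to verify that the blocks $R_1,R_2,R_{12}$ and $K_{12}$ in (\ref{RK12}) are indeed $E$-independent (they inherit this from $R$ and $K$ as noted after (\ref{RK12})), so that $Q$ in (\ref{Q}) depends only on the fixed data $E^{(1)}, E^{(2)}$ and the coupling parameters, making (\ref{E12opt}) a genuine linear equation for the unknown $E^{(12)}$.
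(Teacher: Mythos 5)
Your proposal is correct and follows essentially the same route as the paper: both reduce (\ref{tauhatmaxR12}) to minimizing the convex quadratic $\ddot{\Delta}(0)$ over $E^{(12)}$ and then extract the $E^{(12)}$-block of the gradient $\d_E\ddot{\Delta}(0)=8(2RE+K)$ from (\ref{8REK}) using the partitions (\ref{EEE}), (\ref{RK12}), arriving at $2R_{12}E^{(12)}+Q=0$ with $Q$ as in (\ref{Q}). Your added remarks on convexity being inherited by the affine section and on the $E$-independence of the blocks are consistent with the paper's appeal to ``the convexity properties mentioned there'' and its note following (\ref{RK12}).
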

\begin{proof}
The proof follows the lines of the proof of Theorem~\ref{th:R} (including the convexity properties mentioned there) except that the optimality condition (\ref{opt}) is replaced with
\begin{equation}
\label{opt12}
    \d_{E^{(12)}} \ddot{\Delta}(0) = 0.
\end{equation}
The left-hand side of (\ref{opt12}) is an appropriate subvector of the gradient $\d_E \ddot{\Delta}(0)$ computed in (\ref{8REK}), which, in view of (\ref{EEE}), (\ref{RK12}), yields
\begin{equation}
\label{subgrad}
    \frac{1}{8}
    \d_{E^{(12)}} \ddot{\Delta}(0)
     =
    2
  {\begin{bmatrix}
    R_1 & R_2 & R_{12}
  \end{bmatrix}}    E + K_{12}
  =
  2R_{12} E^{(12)} + Q,
\end{equation}
with $Q$ given by (\ref{Q}). A combination of (\ref{opt12}) with (\ref{subgrad}) establishes (\ref{E12opt}) as a necessary and sufficient condition of optimality for $E^{(12)}$ in (\ref{tauhatmaxR12}).
\end{proof}

Note that if the matrix $R_{12}$ in (\ref{RK12}) is positive definite, then the optimal  value of $E^{(12)}$ in the problem (\ref{tauhatmaxR12}) is found from (\ref{E12opt}) uniquely  as $E^{(12)} = -\frac{1}{2}R_{12}^{-1} Q$ and depends on the individual energy vectors $E^{(s)}$ in an affine fashion since so does the vector $Q$ in (\ref{Q}). 

Although Theorem~\ref{th:R12} is concerned with a direct energy coupling of two systems, this approach to memory decoherence time  optimization is extendable to more complicated interconnections of several multiqubit systems involving both direct and field-mediated coupling, which will be discussed elsewhere.

\end{document}